\pgfplotsset{compat=newest}
\newtheorem{assumption}{Assumption}
\newtheorem{prop}{Proposition}
\newtheorem{cor}{Corollary}
\newcommand{\R}{\mathbb{R}}
\newcommand{\K}{\mathcal{K}}
\newcommand{\U}{\mathbb{U}}
\newcommand{\X}{\mathbb{X}}
\newcommand{\Zs}{\mathcal{Z}^{\mathrm{s}}}
\newcommand{\ZsAPPR}[1]{\hat \Zs\lr{\D_{#1}}}
\newcommand{\ZsAPPRy}[1]{\hat{\mathcal{Z}}_{\mathrm{y}}^\mathrm{s}\lr{\D_{#1}}}
\newcommand{\Jmpc}{J_{\mathrm{MPC}}}
\newcommand{\costFuncApprOPT}[1]{\hat J^*\lr{\D_{#1}}}
\newcommand{\costFuncMpcOPT}[1]{\Jmpc^*\lr{x_{#1},\; \D_{#1}}}
\newcommand{\LyapFunc}[1]{V\lr{x_{#1},\D_{#1}}}
\newcommand{\costFuncMAX}{\hat J^{\max}}
\newcommand{\costFuncMAXmpc}{\hat J^{\max}_{\mathrm{MPC}}}
\newcommand{\at}[2]{\left.#1\right|_{#2}}
\newcommand{\norm}[1]{\left |\left | #1 \right |\right |}
\newcommand{\set}[2]{\left\{ #1 \middle|\, #2\right\}}
\newcommand{\lr}[1]{\left( #1 \right)}
\newcommand{\mr}[1]{\mathrm{#1}}
\newcommand{\D}{\mathcal{D}}
\newcommand{\bone}{\mathbb{1}}
\newcommand{\dtilde}[1]{\tilde{\raisebox{0pt}[.95\height]{$\tilde{#1}$}}}
\newcommand{\Aidx}[1]{\D_{#1}} 
\begin{document}
\mainmatter              
\title{Adaptive tracking MPC for nonlinear systems via online linear system identification}
\titlerunning{Adaptive tracking MPC for nonlinear systems}  
%
\author{Tatiana Strelnikova\inst{1} \and Johannes K\"ohler\inst{2}
\and Julian Berberich\inst{3}}
\authorrunning{T. Strelnikova et al.} 
%
\tocauthor{Tatiana Strelnikova, Johannes K\"ohler, and Julian Berberich}
\institute{M.Sc.\ student, University of Stuttgart, 70569 Stuttgart, Germany,\\
\email{strelnikovatatyana@gmail.com}
\and
Institute for Dynamic Systems and Control, ETH Z\"urich, 8092 Z\"urich, Switzerland,\\
\email{jkoehle@ethz.ch}
\and
University of Stuttgart, Institute for Systems Theory and Automatic Control, 70569 Stuttgart, Germany,
\email{julian.berberich@ist.uni-stuttgart.de}
}

\maketitle              

\begin{abstract}
This paper presents an adaptive tracking model predictive control (MPC) scheme to control unknown nonlinear systems based on an adaptively estimated linear model.
The model is determined based on linear system identification using a moving window of past measurements, and it serves as a local approximation of the underlying nonlinear dynamics.
We prove that the presented scheme ensures practical exponential stability of the (unknown) optimal reachable equilibrium for a given output setpoint.
Finally, we apply the proposed scheme in simulation and compare it to an alternative direct data-driven MPC scheme based on the Fundamental Lemma.
\keywords{predictive control, system identification, nonlinear systems}
\end{abstract}
%


\section{Introduction}
Designing controllers for unknown systems based on data is a subject of increasing interest in the recent literature.
One popular approach is to employ the Fundamental Lemma~\cite{willems2005note} for parametrizing trajectories using directly measured data.
Model predictive control (MPC) is a modern control technique which can handle nonlinear systems, performance criteria, and constraints on system variables~\cite{rawlings2017model}.
One of the main applications of the Fundamental Lemma is the design of direct data-driven MPC schemes for linear systems~\cite{berberich2021guarantees,coulson2019deepc}, see~\cite{faulwasser2023behavioral,markovsky2021behavioral,markovsky2023data,verheijen2023handbook} for recent survey papers.
These MPC schemes even admit desirable closed-loop guarantees in the presence of noise~\cite{berberich2021guarantees}.

Alternatively, data can be used in an indirect data-driven approach, determining a model via system identification~\cite{ljung1987system} which can then be employed for model-based MPC.
For example, in the field of adaptive MPC, set membership techniques have been employed to robustly control uncertain systems while additionally reducing uncertainty based on online measurements~\cite{adetola2011robust,guay2015robust,lu2021robust,tanaskovic2014adaptive}.
Despite the surge of research on direct data-driven MPC, the connection between direct and indirect data-driven MPC approaches is still subject to ongoing research~\cite{breschi2023data,doerfler2023bridging,doerfler2023data,krishnan2021on}.

In this paper, we present an indirect data-driven MPC scheme for controlling unknown nonlinear systems based on measured data.
The proposed approach uses a moving window of online measurements in order to estimate an approximate local model of the nonlinear system via linear system identification.
We prove that the resulting indirect data-driven MPC approach practically exponentially stabilizes the closed-loop system.
The presented approach is inspired by and closely follows the recent works~\cite{berberich2022linear1,berberich2022linear2}.
The main difference is that~\cite{berberich2022linear1} uses the true linearized dynamics for prediction (i.e., it is a model-based MPC scheme) and~\cite{berberich2022linear2} uses only input-output data via the Fundamental Lemma~\cite{willems2005note} for prediction (i.e., it is a direct data-driven MPC scheme).
We show that using an indirect data-driven control approach leads to comparable theoretical guarantees as in~\cite{berberich2022linear2}.
Further, we compare the performance of both approaches in a numerical case study, highlighting individual advantages and drawbacks.

The remainder of the paper is structured as follows.
In Section~\ref{sec:preliminaries}, we present the problem setup as well as the considered system identification approach.
Section~\ref{sec:scheme} introduces the proposed indirect data-driven MPC scheme, whose theoretical guarantees are proven in Section~\ref{sec:analysis}.
Further, in Section~\ref{sec:example}, we apply the proposed approach to an example system from the nonlinear MPC literature and compare the outcome to the approaches from~\cite{berberich2022linear1,berberich2022linear2}.
Finally, Section~\ref{sec:conclusion} concludes the paper with a discussion of advantages and drawbacks of the proposed indirect data-driven MPC scheme in comparison to the direct data-driven approach from~\cite{berberich2022linear2}.

\textit{Notation:}
For a vector $x$ and a matrix $P=P^\top$, we write $\| x\rVert_2$ for the Euclidean norm of $x$ and $\| x\rVert_P:=\sqrt{x^\top Px}$ for the weighted norm of $x$.
Moreover, we denote the minimum eigenvalue and singular value of $P$ by $\lambda_{\min}(P)$ and $\sigma_{\min}(P)$, respectively.
We write $\mathcal{K}_{\infty}$ for the set of all continuous functions $\alpha:\mathbb{R}_{\geq0}\to\mathbb{R}_{\geq0}$ which are strictly increasing, unbounded, and satisfy $\alpha(0)=0$.

\section{Preliminaries}\label{sec:preliminaries}
In this section, we introduce the problem setup (Section~\ref{subsec:problem_setup}) as well as the employed system identification approach (Section~\ref{subsec:system_identification}).

\subsection{Problem setup}\label{subsec:problem_setup}

We consider a discrete-time, control-affine nonlinear system of the form
\begin{align}
	\label{eq:nonlinSys}
	\begin{split}
		x_{t+1} &= f(x_t, u_t) = f_0(x_t) + Bu_t, \\
		y_t     &= h(x_t, y_t) = h_0(x_t) + D u_t,
	\end{split}
\end{align}
with state $x_t \in \R^n$, control input $u_t \in \R^m$, output $y_t \in \R^p$, and time $t \in \mathbb{N}$. 
The vector fields $f_0: \R^n \to \R^n$, $h_0: \R^n \to \R^p$ are twice continuously differentiable and $B \in \R^{m \times n}$ and $D \in \R^{p \times m}$ are real matrices.  The system is subject to input constraints $u_t\in\mathbb{U}$, $t\in\mathbb{N}$, with a compact set $\U\subseteq\mathbb{R}^m$.
The  goal is to steer the nonlinear system \eqref{eq:nonlinSys} towards a steady-state that tracks an output reference $y^{\mathrm{r}} \in \R^p$, i.e., $\lim_{t\rightarrow\infty}\|y_t-y^{\mathrm{r}}\|_2=0$.
The key challenge is that the system dynamics~\eqref{eq:nonlinSys} are unknown, i.e., $f_0$, $h_0$, $B$, and $D$ are not available, and instead we only have access to state and output measurements of the system~\eqref{eq:nonlinSys}.

\subsection{System identification}\label{subsec:system_identification}
The proposed approach relies on the online identification of an affine model that locally approximates the nonlinear dynamics~\eqref{eq:nonlinSys}.
In particular, for a state $\tilde{x}\in\mathbb{R}^n$ and time $t\in\mathbb{N}$, we  define the affine dynamics 
\begin{align} \label{eq:linSys}
	\begin{split}
	x_{t+1} &= f_{\tilde{x}} (x_t, u_t) := A_{\tilde{x}} x_t + B u_t + e_{\tilde{x}}\\
	y_t     &= h_{\tilde{x}} (x_t, u_t) := C_{\tilde{x}} x_t + D u_t + r_{\tilde{x}}
	\end{split}
\end{align}
where 
\begin{align*}
	A_{\tilde{x}} = \at{\frac{\partial f_0}{\partial x} }{\tilde{x}}, \;\ e_{\tilde{x}} := f_0(\tilde{x}) - A_{\tilde{x}} x_t, 
	\\
	C_{\tilde{x}} = \at{\frac{\partial h_0}{\partial x} }{\tilde{x}}, \;\ r_{\tilde{x}} := h_0(\tilde{x}) - C_{\tilde{x}} x_t,
\end{align*}
%
which corresponds to linearizing the nonlinear system \eqref{eq:nonlinSys} at $(\tilde{x}, 0)$.
We use input-state-output measurements to identify these affine dynamics~\eqref{eq:linSys} online and, thereby, provide a local approximation of the nonlinear dynamics~\eqref{eq:nonlinSys}.
More precisely, at time $t$, we estimate the dynamics linearized at $x_t$ using the most recent $N$ measurements of the nonlinear system~\eqref{eq:nonlinSys}, i.e., using the data set
$\D_t:=\big\{\{x_k\}_{k=t-N}^t, \{u_k, y_k\}_{k=t-N}^{t-1}\big\}$.

The main rationale behind this approach is that, if the closed-loop trajectory does not change too rapidly, then the data $\D_t$ can be approximately explained from the linearized dynamics~\eqref{eq:linSys}.
Thus, they allow to identify the linearized dynamics via least-squares estimation:
Given the data set $\D_t$ at time $t$, we determine an estimate $(\hat A_{t}, \hat B_t, \hat e_{t})$ of the linearized dynamics $(A_{x_t},B,e_{x_t})$ as the minimizer of 
	\begin{align}
		\min_{\tilde A, \tilde B, \tilde e}
		\sum_{k=t-N}^{t-1} \norm{x_{k+1} - \tilde A x_k - \tilde B u_k - \tilde e}^2_2.
		\label{eq:LSPbm}
	\end{align}
Likewise, an estimate $(\hat C_{t}, \hat D_t, \hat r_{t})$ of the linearized output parameters $(C_{x_t},D,r_{x_t})$ is determined by solving
	\begin{align}
		\min_{\tilde C, \tilde D, \tilde r}
		\sum_{k=t-N}^{t-1} \norm{y_{k} - \tilde C x_k - \tilde D u_k - \tilde r}^2_2.
		\label{eq:LSPbm_output}
	\end{align}
Let us denote 
\begin{align*}
	\begin{aligned}
		X_t &:= [x_{t-N}, x_{t-N+1}, \dots, x_{t-1}], 
		&U_t:=& [u_{t-N}, u_{t-N+1}, \dots, u_{t-1}], \\
		X_t^+ &:= [x_{t-N+1}, x_{t-N+2}, \dots, x_{t}], 
		&Y_t:=& [y_{t-N}, y_{t-N+1}, \dots, y_{t}],
	\end{aligned}
\end{align*} 
and $Z_t:=\begin{bmatrix}
			X_t^{\top} & U_t^{\top} & \bone \end{bmatrix}^{\top}$ with $\bone = \begin{bmatrix}
	1 & 1 & \dots & 1
\end{bmatrix}^{\top}$. 
If $Z_t$ has full row rank (compare Assumption~\ref{ass:PE} below), then the minimizers of~\eqref{eq:LSPbm} and~\eqref{eq:LSPbm_output} are given by
\begin{align}\label{eq:lsq_explicit_solution}
	\begin{bmatrix}
		\hat{A}_t & \hat{B}_t & \hat{e}_t
	\end{bmatrix} &= X_t^+ Z_t^{\top} (Z_t Z_t^{\top})^{-1},\>\>
	\begin{bmatrix}
		\hat{C}_t & \hat{D}_t & \hat{r}_t
	\end{bmatrix} = Y_t Z_t^{\top} (Z_t Z_t^{\top})^{-1}.
\end{align}

%
Let us now define the identified dynamics for the given data $\D_t$ as
\begin{align}\label{eq:identSys}
	\begin{split}
		x_{k+1} &= \hat f_{t}(x_k,u_k) :=\hat{A}_tx_k  +\hat{B}_tu_k +\hat{e}_t, \\
		y_k     &=\hat  h_{t}(x_k,u_k) :=\hat{C}_tx_k +\hat{D}_t u_k +\hat{r}_t,\;\; k \in\mathbb{N}.	\end{split}
\end{align}

\section{Adaptive tracking MPC for nonlinear systems via linear system identification}\label{sec:scheme}

In this section, we present the proposed linear tracking MPC scheme that steers the nonlinear system \eqref{eq:nonlinSys} towards a desired target setpoint $y^{\mathrm{r}}$.
This scheme relies on online system identification using least-squares estimation and, as shown in Section~\ref{sec:analysis}, admits desirable closed-loop guarantees.

At time $t \geq N$, the MPC is based on the following optimization problem
\begin{subequations}
	\label{eq:myMPC}
	\begin{align}
		\underset{{\begin{array}{c}
					\hat x(t), \hat u(t), \\
					\hat x^{\mathrm{s}}(t), \hat u^{\mathrm{s}}(t),
					\hat y^{\mathrm{s}}(t)
		\end{array}}}{\mathrm{minimize}} 
		&\Jmpc\lr{\hat x(t), \hat u(t), \hat x^{\mathrm{s}}(t), \hat u^{\mathrm{s}}(t), \hat y^{\mathrm{s}}(t)} 
		\\
		\text{ subject to } 
		&\hat x_{k+1}(t) = \hat{A}_t\hat x_k(t) + \hat{B}_t \hat u_k(t) + \hat{e}_t,\label{eq:myMPC_model}
		\\\label{eq:myMPC_init_term}
		& \hat x_0(t) = x_t, \hat x_N(t) = \hat x^{\mathrm{s}}(t),\\\label{eq:myMPC_steady_state}
		&
  \hat{x}^{\mathrm{s}}(t)
  =\hat{A}_t\hat{x}^{\mathrm{s}}(t)
  +\hat{B}_t\hat{u}^{\mathrm{s}}(t) +\hat{e}_t,\\\label{eq:myMPC_steady_state_output}
		&\hat y^{\mathrm{s}}(t) = \hat{C}_t \hat x^{\mathrm{s}}(t) + \hat{D}_t \hat u^{\mathrm{s}}(t) + \hat{r}_t,
  \\\label{eq:myMPC_input_constraints}
		&\hat u_k(t) \in \mathbb{U}, k = 0 , 1, \dots, L-1,\> \hat{u}^{\mathrm{s}}(t)\in\U^{\mathrm{s}},
	\end{align}
\end{subequations}
where the MPC cost function is defined as
\begin{align} \label{eq:costFuncMPC}
	\begin{split}
		&\Jmpc\lr{\hat x(t), \hat u(t), \hat x^{\mathrm{s}}(t), \hat u^{\mathrm{s}}(t), \hat y^{\mathrm{s}}(t)}  \\
		:= &\sum_{k = 0}^{L-1} \norm{\hat x_k(t)  -\hat  x^{\mathrm{s}}(t)}^2_Q
		+
		\norm{\hat u_k(t) - \hat u^{\mathrm{s}}(t)}_R^2 + \norm{\hat y^{\mathrm{s}}(t) - y^{\mathrm{r}}}_S^2
	\end{split}
\end{align}
with user-specified, positive definite weight matrices $Q$, $R$, $S$.
Here, $\hat x(t)$  and $\hat u(t)$ denote the state and input trajectory which are predicted at time $t$ over the horizon $L$.
The prediction model~\eqref{eq:myMPC_model} relies on the identified dynamics~\eqref{eq:identSys} which serve as an approximation of the local linearization~\eqref{eq:linSys} at $x_t$.
Further,~\eqref{eq:myMPC_init_term} contains initial conditions as well as a terminal equality constraint which ensures that, at the end of the prediction horizon, the predicted state is equal to \emph{some} (artificial) steady-state $\hat{x}^{\mathrm{s}}(t)$.
To be precise, $\hat{x}^{\mathrm{s}}(t)$, $\hat{u}^{\mathrm{s}}(t)$, and $\hat{y}^{\mathrm{s}}(t)$ are online optimization variables which, according to~\eqref{eq:myMPC_steady_state} and~\eqref{eq:myMPC_steady_state_output}, are steady-states for the identified dynamics.
The cost~\eqref{eq:costFuncMPC} ensures that the predicted trajectory is kept close to the artificial steady-state (first and second term in~\eqref{eq:costFuncMPC}), whose distance to the setpoint $y^{\mathrm{r}}$ is penalized (third term in~\eqref{eq:costFuncMPC}).
Online optimization over artificial steady-states can substantially increase the region of attraction and allows MPC schemes to cope with online setpoint changes~\cite{limon2008mpc}.
Finally,~\eqref{eq:myMPC_input_constraints} contains the desired input constraints as well as a constraint on the artificial equilibrium input for some convex and compact $\U^{\mathrm{s}}\subset\mathrm{int}(\U)$, which is required for a technical argument in the theoretical analysis below.

If $\U$, $\U^{\mathrm{s}}$ are polytopic, then problem~\eqref{eq:myMPC} is a convex quadratic program which can be solved efficiently.
We denote the minimizer of~\eqref{eq:myMPC} by $\hat x(t)^*$, $\hat u^*(t)$, $\hat x^{\mathrm{s}*}(t)$, $\hat u^{\mathrm{s}*}(t)$, $\hat y^{\mathrm{s}*}(t)$ and the corresponding optimal cost by $\costFuncMpcOPT{t}$.

Algorithm~\ref{alg:MPC} summarizes the proposed MPC scheme.
The algorithm consists of two main steps, indicating its \emph{indirect} data-driven control nature:
At time $t$, the last $N$ input-state-output measurements are used to estimate an affine prediction model~\eqref{eq:linSys} which serves as a local approximation of the nonlinear dynamics~\eqref{eq:nonlinSys}.
Next, the MPC problem~\eqref{eq:myMPC} is solved and the first $n$ steps of the optimal input trajectory are applied to the nonlinear system.
We consider such a multi-step implementation to ensure recursive feasibility despite the model approximation error and the terminal equality constraints, see~\cite{berberich2022inherent}.

\begin{algorithm}\caption{
Adaptive tracking MPC scheme
} \label{alg:MPC}
	\textbf{Offline:} Choose $Q$,$R$,$S$, $\U$, $L$, $y^{\mathrm{r}}$, and
	generate initial data trajectory $\D_N$ of length $N$.
	\begin{enumerate}
		\item At time $t$ and given data $\D_t$, compute $\hat{A}_t$, $\hat{B}_t$, $\hat{e}_t$, $\hat{C}_t$, $\hat{D}_t$, $\hat{r}_t$ as in~\eqref{eq:lsq_explicit_solution}.
		\label{item:step1}
		\item Solve problem \eqref{eq:myMPC} and apply $u_{t+k} = \hat u^*_k(t)$, $k = 0, 1, \dots, n-1$, over the next $n$ time steps.
		\item Set $t=t+n$ and go back to \ref{item:step1}.
	\end{enumerate}
\end{algorithm}

We note that Algorithm~\ref{alg:MPC} is analogous to the MPC approach in~\cite{berberich2022linear1} with the main difference that, instead of the true linearized dynamics, a data-driven estimate thereof is used for prediction.
This makes the proposed approach an indirect data-driven control method.
Alternatively,~\cite{berberich2022linear2} employs an analogous MPC scheme using the Fundamental Lemma~\cite{willems2005note} for prediction, which leads to a direct data-driven control method.
In Section~\ref{sec:conclusion}, we discuss differences of the proposed approach to~\cite{berberich2022linear2}, in particular advantages and drawbacks. 

\section{Theoretical analysis}\label{sec:analysis}
In this section, we show that Algorithm~\ref{alg:MPC} ensures that the closed-loop system practically exponentially stabilizes the (unknown) steady-state that tracks the desired reference $y^{\mathrm{r}}$.
The main idea is as follows:
If the cost matrix $S$ is sufficiently small and the current state $x_t$ is close to the steady-state manifold, then minimization of the cost~\eqref{eq:costFuncMPC} implies that the optimal artificial steady-state $\hat{x}^{s*}(t)$ is close to $x_t$.
In particular, the entire predicted trajectory $\hat{x}(t)$ remains in a small region around $x_t$, where the dynamics linearized at $x_t$~\eqref{eq:linSys} are a good approximation of the nonlinear dynamics~\eqref{eq:nonlinSys}.
Similarly, in closed-loop operation, the past data trajectory $\D_t$ varies within a small region around $x_t$ and, therefore, the least-squares estimate used for prediction is a close approximation of the linearized dynamics.
In summary, a small choice of $S$ ensures that the closed loop moves towards $y^{\mathrm{r}}$ \emph{slowly enough}
such that the identified linear model is a sufficiently close approximation of the nonlinear dynamics. 

In the remainder of this section, we will make this argument more precise by proving practical exponential stability under the proposed MPC scheme.
In Section~\ref{subsec:assumptions}, we first introduce technical assumptions and definitions concerning the system dynamics as well as the available data that are required for the theoretical results.
We note that these assumptions are analogous to the assumptions made on model-based and data-driven linearization-based MPC for nonlinear systems in~\cite{berberich2022linear1,berberich2022linear2}.
Next, in Section~\ref{subsec:stability} we present our main stability result.

\subsection{Technical assumptions and definitions}\label{subsec:assumptions}

The steady-state manifold of~\eqref{eq:nonlinSys} is given by 
	\begin{align}
		\Zs &:= \set{\begin{pmatrix} x^{\mathrm{s}}, u^{\mathrm{s}}\end{pmatrix} \in \R^n \times \U^{\mathrm{s}}}{x^{\mathrm{s}} = f(x^{\mathrm{s}}, u^{\mathrm{s}})}.
	\end{align}
 We define the projection on the output as
	\begin{align}
		\Zs_y &:=  \set{y^{\mathrm{s}} \in \R^p}{\exists (x^{\mathrm{s}}, u^{\mathrm{s}})\in\Zs: \, y^{\mathrm{s}} = h(x^{\mathrm{s}}, u^{\mathrm{s}})}. 
	\end{align}	
The best possible behavior we can achieve in closed loop is to steer the system to the \textit{optimal reachable equilibrium} which is defined as the minimizer $y^{\mathrm{sr}}$ of
\begin{align}
	\label{eq:costFunc_nonlin}
	J^* &:= \min_{y^\mathrm{s} \in \Zs_y} J(y^\mathrm{s}) := \min_{y^\mathrm{s} \in \Zs_y} \norm{y^\mathrm{s} - y^r}^2_S.
\end{align}
We denote the corresponding steady-state, which we assume is unique (cf.\ Assumption~\ref{ass:UniqueSS} below), by $x^{\mathrm{sr}}$.
 Since the proposed approach relies on least-squares estimation of the linearized dynamics, we also define the corresponding elements for the identified dynamics.
 To be precise, for data $\D_t$, we define the steady-state manifold for the identified system as
 \begin{align*}
 	\hat\Zs \lr{\D_t}  &:= \set{(x^{\mathrm{s}}, u^{\mathrm{s}}) \in \R^n \times \U^{\mathrm{s}}}{x^{\mathrm{s}} = \hat{A}_t x^{\mathrm{s}} + \hat{B}_t u^{\mathrm{s}} + \hat{e}_t},\\
 	\ZsAPPRy{t} &:=  \set{y^{\mathrm{s}} \in \R^p}{\exists (x^{\mathrm{s}}, u^{\mathrm{s}}) \in \hat{\mathcal{Z}}_{\mathrm{y}}^{\mathrm{s}} \lr{\D_t}: \, y^{\mathrm{s}} = \hat{C}_t x^{\mathrm{s}} + \hat{D}_t u^{\mathrm{s}} + \hat r_{t}}.
 \end{align*}
 Moreover, we also define the optimal reachable equilibrium w.r.t. the identified dynamics as the minimizer $\hat y^{\mathrm{sr}}(\D_t)$ of
 \begin{align}
	\costFuncApprOPT{t} &:= \min_{y^\mathrm{s} \in \ZsAPPRy{t}}  J(y^{\mathrm{s}})= \min_{y^\mathrm{s} \in \ZsAPPRy{t}} \norm{y^{\mathrm{s}} - y^{\mathrm{r}}}^2_S. \label{eq:costFunc_ident}
\end{align}
To ensure the existence and uniqueness of the optimal reachable equilibrium, we make following assumption.
\begin{assumption}\label{ass:UniqueSS}
	There exists $\sigma_s > 0$, such that for any $t \geq N$, it holds that $\sigma_{\min}\lr{\begin{bmatrix}
					\hat A_t - I & \hat B_t \\ \hat C_t & \hat D_t
			\end{bmatrix}} \geq \sigma_s$ and the matrix $\begin{bmatrix}
					\hat A_t - I & \hat B_t \\ \hat C_t & \hat D_t
			\end{bmatrix}$ has full column rank.
\end{assumption}
Assumption~\ref{ass:UniqueSS} is a common assumption in tracking MPC~\cite[Remark 1]{limon2018nonlinear}, \cite[Lemma 1.8]{rawlings2017model}.
It means that, for any output steady-state for the identified dynamics, there exists a unique state-input pair which is also a corresponding steady-state.
More precisely, there exist a unique affine map $g_{\Aidx{t}}: \ZsAPPRy{t} \to \hat \Zs\lr{\D_t}$ such that $\hat y^{\mathrm{s}} \mapsto (\hat x^{\mathrm{s}}, \hat u^{\mathrm{s}})$, where $\hat y^{\mathrm{s}} = \hat h_{t}(\hat x^{\mathrm{s}}, \hat u^{\mathrm{s}})$.
	Further, $g_{\Aidx{t}}$ is uniformly Lipschitz continuous for all $ t \geq N$.
We write $(\hat x^{\mathrm{sr}}(\D_t), \hat u^{\mathrm{sr}}(\D_t)) := g_{\Aidx{t}}(\hat y^{\mathrm{sr}}(\D_t))$ for the input and state corresponding to the optimal reachable output $\hat{y}^{\mathrm{sr}}$.
Moreover, we make the following technical assumptions.
\begin{assumption}\label{ass:technical_assumptions}
	\begin{enumerate}
		\item[(a)] The pair $(\hat{A}_{t}, \hat{B}_{t})$ is uniformly controllable for all $t \geq N$, i.e., the minimum singular value of $\begin{bmatrix}
			\hat{B}_t & \dots & \hat{A}_t^{n-1} \hat{B}_{t}
		\end{bmatrix}$ is uniformly bounded. 

		\item[(b)] There exists $\sigma_l > 0$, such that $\sigma_{\min}(I - \hat{A}_t) \geq \sigma_l$ for all $t \geq N$.
		
		\item[(c)] There exists a compact set $\mathcal{B}$ such that $\ZsAPPR{t} \subseteq \mathcal{B}$ for all $t$. 
	\end{enumerate}
\end{assumption} 

Assumption~\ref{ass:technical_assumptions} (a) is required to construct the candidate solution in the proof of our main result.
Moreover, Assumption~\ref{ass:technical_assumptions} (b) implies non-singularity of the dynamics in the sense that, for any steady-state input there exists a unique corresponding state component.
Finally, Assumption~\ref{ass:technical_assumptions} (c) ensures compactness of the steady-state manifold.
We refer to~\cite[Section II.B]{berberich2022linear1} for a detailed discussion of these assumptions and possibilities to satisfy or relax them.

\begin{assumption} \label{ass:lastAss}
	There exist constants $c_l$, $c_u > 0$ such that, for any $t\in\mathbb{N}$, it holds that 
	\begin{align}
		c_l \norm{x_t - \hat x^{\mathrm{sr}}(\D_t)}_2^2 \leq \norm{x_t -  x^{\mathrm{sr}}}_2^2 \leq c_u \norm{x_t - \hat x^{\mathrm{sr}}(\D_t)}_2^2.
	\end{align}
\end{assumption}

Assumption~\ref{ass:lastAss} relates the optimal reachable steady-states of the nonlinear and the identified dynamics.
In~\cite{berberich2022linear1}, it is shown that the assumption holds when replacing $\hat x^{\mathrm{sr}}(\D_t)$ by the optimal reachable steady-state for the linearized dynamics~\eqref{eq:linSys}, assuming that $y^{\mathrm{r}}$ is reachable and $m=p$.
The latter distinction between linearized and identified dynamics does not pose problems for our purposes since the following theoretical analysis ensures in closed loop that the model identified at time $t$ is close to the true linearization at $x_t$ (modulo a small approximation error that can be handled using a robustness argument).

Finally, we assume persistence of excitation (PE) of the closed-loop trajectories generated via Algorithm~\ref{alg:MPC}.
\begin{assumption} \label{ass:PE}
	There exists $\sigma_Z > 0$ such that $\sigma_{\min}(Z_t) \geq \sigma_Z$ and $Z_t$ has full row rank for all $t\in\mathbb{N}$. 
\end{assumption}
While all previous assumptions are satisfied under realistic conditions~\cite{berberich2022linear1}, Assumption~\ref{ass:PE} is not guaranteed a priori under the proposed control approach.
Although a complete theoretical solution to ensuring closed-loop PE remains an open problem, there are various effective heuristics that yield good practical results, e.g.,
adding a PE disturbance signal to the input, stopping the data updates once the system is close to the setpoint, or adding a cost term which encourages PE trajectories~\cite{lu2021robust}.

\subsection{Practical Exponential Stability}\label{subsec:stability}

The following result shows that, under the above assumptions, the proposed MPC scheme practically exponentially stabilizes the optimal reachable steady-state $x^{\mathrm{s}\mathrm{r}}$ for the nonlinear system~\eqref{eq:nonlinSys} in closed loop.
The proof relies on the Lyapunov function candidate 
\begin{align}\label{eq:lyap_function_candidate}
    V(x_t,\D_t)\coloneqq \costFuncMpcOPT{t}-\hat{J}^*(\D_t).
\end{align}

\begin{theorem} \label{thm:mainResult}
	Suppose $t \geq N \geq n$ and Assumptions \ref{ass:UniqueSS} -- \ref{ass:PE} hold. Then, there exists $ \bar{\varepsilon}>0$ such that, for any $\varepsilon\in[0,\bar{\varepsilon}]$, there exist $V_{\max}, s_u , C>0 $ and $0 < c_V < 1$, $\beta \in \mathcal{K}_{\infty}$, such that, if
	\begin{align}\label{eq:thm_stability_conditions_S_V}
		& \lambda_{\max} (S) < s_u, V(x_N,\D_N) \leq V_{\max}
		\\
		&\norm{x_N - x_k}_2^2 \leq  \varepsilon \text{ for all } k = 0, \dots, N-1, \label{eq:mp:mainThmError}
	\end{align}
	then, for any $ t= N + ni$, $i \geq 0$, the problem \eqref{eq:myMPC} is feasible and the closed loop under Algorithm \ref{alg:MPC} satisfies 
	\begin{align}
		\norm{x_t - x^{\mathrm{sr}}}_2^2 \leq c_V^i C \norm{x_0 -  x^{\mathrm{sr}}}_2^2 + \beta(\varepsilon). \label{eq:mp:mainResult}
	\end{align}
\end{theorem}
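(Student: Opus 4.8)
The plan is to establish practical exponential stability via the Lyapunov function candidate $V(x_t,\D_t) = \costFuncMpcOPT{t} - \hat J^*(\D_t)$ defined in~\eqref{eq:lyap_function_candidate}, following the structure of~\cite{berberich2022linear1,berberich2022linear2} but carefully tracking the additional error incurred by using the least-squares estimate~\eqref{eq:identSys} instead of the true linearization~\eqref{eq:linSys}. First I would show that $V$ satisfies two-sided bounds of the form $c_1\norm{x_t - \hat x^{\mathrm{sr}}(\D_t)}_2^2 \le V(x_t,\D_t) \le c_2\norm{x_t - \hat x^{\mathrm{sr}}(\D_t)}_2^2$ on the relevant sublevel set. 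The lower bound follows from positive-definiteness of $Q$ (the stage cost at $k=0$ dominates, using $\hat x_0(t) = x_t$ and the fact that the optimal artificial steady-state can be no better than $\hat y^{\mathrm{sr}}$, so $\costFuncMpcOPT{t} \ge \norm{x_t - \hat x^{\mathrm{s}*}(t)}_Q^2 + \hat J^*(\D_t)$ combined with a triangle-inequality argument relating $\hat x^{\mathrm{s}*}(t)$ to $\hat x^{\mathrm{sr}}(\D_t)$); the upper bound follows by constructing an explicit feasible candidate that steers $x_t$ to $\hat x^{\mathrm{sr}}(\D_t)$ using uniform controllability (Assumption~\ref{ass:technical_assumptions}(a)) and choosing $\hat y^{\mathrm{s}} = \hat y^{\mathrm{sr}}(\D_t)$, so that the third cost term equals $\hat J^*(\D_t)$ and the first two terms are bounded by $\mathcal{O}(\norm{x_t - \hat x^{\mathrm{sr}}(\D_t)}_2^2)$ via the controllability constant. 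Assumption~\ref{ass:lastAss} then converts these into bounds in terms of $\norm{x_t - x^{\mathrm{sr}}}_2^2$, which is what appears in~\eqref{eq:mp:mainResult}.

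Next I would prove the decrease property: $V(x_{t+n},\D_{t+n}) \le c_V\, V(x_t,\D_t) + \tilde c\,\varepsilon_{\mathrm{loc}}$, where $\varepsilon_{\mathrm{loc}}$ bounds the local model mismatch. The key step is a candidate-solution argument at time $t+n$: apply the first $n$ steps of the time-$t$ optimizer, then shift and concatenate. Because the prediction model is $\hat f_t$ rather than the nonlinear system, the actual state $x_{t+n}$ differs from the predicted $\hat x_n(t)^*$; by a standard continuity/Gr\"onwall estimate this difference is bounded by a constant times $\varepsilon_{\mathrm{loc}} := \sup_k \norm{\hat f_t(x_k,u_k) - f(x_k,u_k)}$ over the trajectory, and $\varepsilon_{\mathrm{loc}}$ in turn is bounded using (i) the Taylor remainder of the twice-differentiable $f_0,h_0$, which is $\mathcal{O}(\text{diam of data window}^2)$, and (ii) the least-squares property~\eqref{eq:lsq_explicit_solution} together with $\sigma_{\min}(Z_t) \ge \sigma_Z$ from Assumption~\ref{ass:PE}, which controls $\norm{(\hat A_t,\hat B_t,\hat e_t) - (A_{x_t},B,e_{x_t})}$ by the residual of explaining $\D_t$ with the true linearization, again $\mathcal{O}(\text{window diam}^2)$. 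Evaluating $\Jmpc$ on this candidate and subtracting the $n$ stage costs actually paid yields, via the terminal equality constraint and exponential controllability of the local linear dynamics (uniform in $t$ by Assumptions~\ref{ass:technical_assumptions}(a)--(b)), a bound $\costFuncMpcOPT{t+n} - \hat J^*(\D_{t+n}) \le (1-\rho)\big(\costFuncMpcOPT{t} - \hat J^*(\D_t)\big) + \mathcal{O}(\varepsilon_{\mathrm{loc}})$ for some $\rho \in (0,1)$, where one also uses that $\hat J^*(\D_{t+n})$ and $\hat J^*(\D_t)$ differ by $\mathcal{O}(\varepsilon_{\mathrm{loc}})$ because the two identified steady-state manifolds are close.

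The crucial closing argument — and the main obstacle — is the bootstrap: the local error $\varepsilon_{\mathrm{loc}}$ is quadratic in the diameter of the closed-loop trajectory over the data window, which is itself controlled by $V$ (through the two-sided bounds and the fact that the predicted trajectory, and hence the applied inputs, stay within $\mathcal{O}(\sqrt{V})$ of the artificial steady-state once $\lambda_{\max}(S) < s_u$ forces $\hat x^{\mathrm{s}*}(t)$ close to $x_t$). So one gets an inequality roughly of the form $V_{t+n} \le (1-\rho)V_t + \tilde c\,(\sqrt{V_t} + \sqrt{\varepsilon})^2 \cdot \text{(small)}$, and one must choose $s_u$, $\bar\varepsilon$, and $V_{\max}$ so that on the sublevel set $\{V \le V_{\max}\}$ the perturbation term is dominated, giving $V_{t+n} \le c_V V_t + \mathcal{O}(\varepsilon)$ with $c_V < 1$, and so that this sublevel set is rendered invariant (this is where $V(x_N,\D_N) \le V_{\max}$ and~\eqref{eq:mp:mainThmError} enter, the latter ensuring the \emph{initial} window at $t=N$ already has small diameter so the first model is accurate). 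Recursive feasibility of~\eqref{eq:myMPC} along $t = N + ni$ follows from the same candidate construction combined with $\U^{\mathrm{s}} \subset \mathrm{int}(\U)$, which gives the slack needed to absorb the $\mathcal{O}(\varepsilon_{\mathrm{loc}})$ perturbation of the shifted input into the hard constraint $\hat u_k(t) \in \U$. Iterating the decrease inequality from $t=N$ and using the two-sided bounds plus Assumption~\ref{ass:lastAss} yields~\eqref{eq:mp:mainResult} with $\beta(\varepsilon) := \frac{C'}{1-c_V}\cdot\mathcal{O}(\varepsilon) \in \mathcal{K}_\infty$; a minor point is accounting for the initial steps $t = 0,\dots,N$ via continuity of the dynamics on compact sets, which only inflates the constant $C$.
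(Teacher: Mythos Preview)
Your proposal follows essentially the same strategy as the paper's sketch: the Lyapunov candidate $V=\costFuncMpcOPT{t}-\hat J^*(\D_t)$ with two-sided quadratic bounds via uniform controllability, an identification-error bound obtained from the Taylor remainder of $f_0,h_0$ combined with the least-squares formula~\eqref{eq:lsq_explicit_solution} and the PE bound $\sigma_{\min}(Z_t)\ge\sigma_Z$, a Lyapunov decrease via a shifted candidate solution with the robustness margin supplied by $\U^{\mathrm s}\subset\mathrm{int}(\U)$, and finally invariance of the sublevel set plus Assumption~\ref{ass:lastAss} to conclude~\eqref{eq:mp:mainResult}.

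The one step where the paper proceeds differently---and where your bootstrap as written does not obviously close---is the recursive control of the data-window diameter. You argue that $\varepsilon_{\mathrm{loc}}$ is $\mathcal O\bigl((\sqrt{V_t}+\sqrt{\varepsilon})^2\bigr)$ because the closed-loop trajectory stays within $\mathcal O(\sqrt{V})$ of the current state; but the window $\D_t$ spans $N$ steps, i.e.\ several successive MPC applications during which the artificial steady-state itself drifts, so the diameter is not immediately controlled by $V_t$ alone, and the argument ``diameter controlled by $V$, $V$-decrease controlled by diameter'' is circular. The paper breaks this by proving a \emph{separate} recursion on $n$-step blocks: if $\sum_{k=j}^{j+n}\norm{x_t-x_k}_2^2\le(V_{\max}+J_{\max})^\alpha$ for all past blocks $j=t-N,\dots,t-n$, then the same bound holds for the new block $j=t$. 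The Lyapunov decrease takes the precise form $\LyapFunc{t+n}\le c_V\LyapFunc{t}+c_\varepsilon\,\varepsilon\,\sqrt{V_{\max}+J_{\max}}$, where the factor $\sqrt{V_{\max}+J_{\max}}$ (your unspecified ``small'') arises from bounding cost differences via $\norm{a}^2-\norm{b}^2\le\norm{a-b}^2+2\norm{a-b}\,\norm{b}$ with $\norm{b}\le\sqrt{\costFuncMpcOPT{t}}$. Setting $\varepsilon=(V_{\max}+J_{\max})^\alpha$, the two recursions close simultaneously precisely for $\alpha\in(1/2,1)$: one needs $\alpha<1$ so that the $\mathcal O(V_{\max}+J_{\max})$ contribution from the stage cost is absorbed into $(V_{\max}+J_{\max})^\alpha$, and $\alpha>1/2$ so that the Lyapunov perturbation $(V_{\max}+J_{\max})^{\alpha+1/2}$ is super-linear in $V_{\max}$ and invariance of $\{V\le V_{\max}\}$ follows. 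Your sketch would go through once this exponent device (or an equivalent two-level recursion) is inserted.
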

\begin{proof}
The proof follows similar arguments as~\cite[Theorem 1]{berberich2022linear1} and~\cite[Theorem 2]{berberich2022linear2}, and we only provide a sketch in the following.
For simplicity of exposition, we assume that all state values considered in this proof lie within a compact set $X$.
This can be ensured using an invariance argument based on the employed Lyapunov function, compare the proofs of~\cite[Theorem 1]{berberich2022linear1}, \cite[Theorem 2]{berberich2022linear2} for details.

\textbf{1) System identification bound}\\
The main difference to~\cite[Theorem 1]{berberich2022linear1} is that the proposed MPC scheme relies on the identified dynamics~\eqref{eq:identSys} rather than the linearized dynamics~\eqref{eq:linSys}.
Therefore, we first derive a bound on the difference between the two.
To this end, we write the nonlinear dynamics applied to the past data points $k=t-N,\dots,t-1$ as
\begin{align}\label{eq:thm_proof_sysid_linearized_dynamics}
    f(x_k,u_k)=f_{x_t}(x_k,u_k)+\Delta_{t,k}
\end{align}
for some $\Delta_{t,k}$ which can be bounded using a Taylor series argument as
\begin{align}\label{eq:thm_proof_sysid_delta_bound}
    \|\Delta_{t,k}\rVert_2\leq c_{\Delta,1}\| x_t-x_k\rVert_2^2
\end{align}
for some $c_{\Delta,1}>0$, see~\cite[Lemma 1]{berberich2022linear2}.
In view of~\eqref{eq:thm_proof_sysid_linearized_dynamics}, we can interpret the process of estimating $f_{x_t}$ from data of the nonlinear system $f$ as a \emph{perturbed} identification problem with nominal affine dynamics $f_{x_t}$ and disturbance $\Delta_{t,k}$.
This allows us to use standard error bounds for least-squares estimation~\cite{ziemann2023tutorial}:
Defining  $\Delta_{[t]}:=\begin{bmatrix}
        \Delta_{t,t-N}&\Delta_{t,t-N+1}&\dots&\Delta_{t,t-1}
    \end{bmatrix}$,
we obtain 
\begin{align}
    &\begin{bmatrix}
        \hat{A}_t&\hat{B}_t&\hat{e}_t
    \end{bmatrix}
    -\begin{bmatrix}
        A_{x_t}&B&e_{x_t}
    \end{bmatrix}\\\nonumber 
    =&X_t^+Z_t^\top(Z_tZ_t^\top)^{-1}
    -(X_t^+-\Delta_{[t]})(Z_t^\top (Z_tZ_t^\top)^{-1}
    =\Delta_{[t]}Z_t^\top (Z_tZ_t^\top)^{-1}.
\end{align}
Thus, using~\eqref{eq:thm_proof_sysid_delta_bound}, the PE bound in Assumption~\ref{ass:PE}, as well as compactness of $X$ and $\U$, we obtain 
\begin{align}
		\|\hat f_{t}(x, u) - f_{x_t}(x, u)\rVert_2 &\leq c_{\Delta,2}\| \Delta_{[t]}\rVert_2
  \leq c_{\Delta,3}\sum_{k=t-N}^{t-1}\| x_t-x_k\rVert_2^2
  \label{eq:mp:varepsilon_f}
	\end{align}
for any $x\in X$, $u\in\U$ and with some $c_{\Delta,2},c_{\Delta,3}>0$.
Analogously, it can be shown that
\begin{align}
		\|\hat h_{t}(x, u) - h_{x_t}(x, u)\rVert_2 
  \leq c_{\Delta,4}\sum_{k=t-N}^{t-1}\| x_t-x_k\rVert_2^2
  \label{eq:mp:varepsilon_h}
	\end{align}
for any $x\in X$, $u\in\U$ and with some $c_{\Delta,4}>0$.

\textbf{2) Lyapunov function bounds}\\
Suppose now that the bound~\eqref{eq:mp:mainThmError} holds and let $t=N$.
This implies that the error between the linearized dynamics~\eqref{eq:linSys} and the identified dynamics~\eqref{eq:identSys} is bounded as
\begin{align}\label{eq:thm_proof_lyapunov_id_bound}
    \|\hat f_{t}(x, u) - f_{x_t}(x, u)\rVert_2&\leq c_f\varepsilon,\\\nonumber 
    \|\hat h_{t}(x, u) - h_{x_t}(x, u)\rVert_2&\leq c_h\varepsilon
\end{align}
for some $c_f,c_h>0$, compare~\eqref{eq:mp:varepsilon_f} and~\eqref{eq:mp:varepsilon_h}.
Using an inherent robustness argument to cope with the identification error~\eqref{eq:thm_proof_lyapunov_id_bound}, compare~\cite{berberich2022inherent}, a straightforward adaptation of the proof of~\cite[Theorem 1]{berberich2022linear1} yields the following:
If $\varepsilon$ is sufficiently small, $\LyapFunc{t} \leq V_{\max}$, $\hat J^*(\D_t) \leq J_{\max}$, $\hat J^*(\D_{t+n}^{\top}) \leq J_{\max}$ for some $J_{\max}>0$, then the problem \eqref{eq:myMPC} is feasible at time $t+n$ and there exist $0 < c_V < 1$, $c_\varepsilon > 0$, such that 
\begin{align}
    \LyapFunc{t+n} \leq c_V \LyapFunc{t} + c_\varepsilon \varepsilon\sqrt{V_{\max}+J_{\max}}. \label{eq:thm_proof_lyapunov_decay_bound}
\end{align}
Using similar ideas as in the proof of~\cite[Theorem 2]{berberich2022linear2}, it can be shown that a variation of the bound~\eqref{eq:mp:mainThmError} holds recursively.
To be precise, if
\begin{align}
    \sum_{k=j}^{j+n}\| x_t-x_k\rVert_2^2\leq (V_{\max}+J_{\max})^{\alpha}
\end{align}
for $j=t-N,\dots,t-n$, then
\begin{align}
    \sum_{k=t}^{t+n}\| x_t-x_k\rVert_2^2\leq (V_{\max}+J_{\max})^{\alpha}
\end{align}
with some $\alpha\in(0,1)$.
As a result, the bound~\eqref{eq:mp:mainThmError} holds recursively as well in the sense that 
\begin{align}
    \| x_{t+n}-x_k\rVert_2^2\leq c_{\varepsilon}(V_{\max}+J_{\max})^{\alpha}
\end{align}
for some $c_{\varepsilon}>0$ and for $k=t+n-N,\dots,t+n-1$.
Choose now $\varepsilon=(V_{\max}+J_{\max})^{\alpha}$ and note that $J_{\max}$ can be made arbitrarily small when $\lambda_{\max}(S)$ is sufficiently small.
Then,~\eqref{eq:thm_proof_lyapunov_decay_bound} implies 
\begin{align}
    \LyapFunc{t+n} \leq c_V \LyapFunc{t} + c_\varepsilon (V_{\max}+J_{\max})^{\alpha+1/2}.
\end{align}
Moreover, choose $J_{\max} = V_{\max}$, and $\alpha > 1/2$.
Then, if $V_{\max}$ is sufficiently small, we infer $\LyapFunc{t+n}\leq V_{\max}$ such that the above bounds hold for $t=N+n$ and, therefore, they hold recursively for all $t=N+ni$, $i\in\mathbb{N}$.

Finally, in analogy to~\cite[Lemma 1]{berberich2022linear1}, suitable lower and upper bounds on the Lyapunov function $\LyapFunc{t}$ can be obtained.
It now follows from standard Lyapunov arguments together with Assumption~\ref{ass:lastAss} that $x^{\mathrm{s}\mathrm{r}}$ is practically exponentially stable.
  \hfill \qed
\end{proof}

Theorem~\ref{thm:mainResult} shows that the optimal reachable steady-state $x^{\mathrm{s}\mathrm{r}}$ of the nonlinear system~\eqref{eq:nonlinSys} is practically exponentially stabilized in closed loop.
This means that the closed-loop trajectory exponentially converges to a region around $x^{\mathrm{s}\mathrm{r}}$ whose size increases with the bound $\varepsilon$ on the distance between the initial data points, compare~\eqref{eq:mp:mainThmError}.
The main proof idea is that, if the latter distance is small initially, then it remains small in closed loop such that the estimated linear model serves as a good local approximation of the nonlinear dynamics.
Beyond the technical assumptions from Section~\ref{subsec:assumptions}, the result requires~\eqref{eq:thm_stability_conditions_S_V}, meaning that the matrix $S$ is sufficiently small and that the Lyapunov function candidate $V(x_N,\D_N)$ is sufficiently small, which can be fulfilled if the initial state $x_0$ is sufficiently close to the steady-state manifold.

Theorem~\ref{thm:mainResult} should be viewed as a qualitative theoretical result, providing insights into the interplay of different system, design, and data parameters as well as the associated closed-loop guarantees.
Rigorously verifying all the associated conditions, e.g., computing bounds $s_u$, $V_{\max}$ as required in~\eqref{eq:thm_stability_conditions_S_V}, is challenging in general and poses an interesting direction for future research.

\section{Numerical Example}\label{sec:example}

In this chapter, we present numerical simulations for the proposed MPC algorithm and compare it to the ones presented in \cite{berberich2022linear1} and \cite{berberich2022linear2}.
The code for the following results is available on \url{https://github.com/tastr/IdentificationBasedMPC}.

The proposed MPC scheme is applied to the nonlinear continuous stirred tank reactor from~\cite{mayne2011tube}, which was also considered in \cite{berberich2022linear1} and \cite{berberich2022linear2} and is given by
\begin{align} \label{eq:ExpSystem}
	f(x,u) :=& \begin{pmatrix}
		x_1 + \frac{T_s}{\theta} (1 - x_1) - T_s \bar k e^{-\frac{M}{x_2}} 
		\\
		x_2 + \frac{T_s}{\theta} (x_f - x_2) + T_s \bar k x_1 e^{-\frac{M}{x_2}} - T_s \alpha u(x_2 - x_c)
	\end{pmatrix}, \\
	h(x,u) :=& x_2.
\end{align}
The states $x_1$ and $x_2 = y$ represent temperature and concentration.
The input $u$ is the coolant flow rate. 
The system results from a Euler discretization of the continuous-time dynamics with time-step $T_s = 0.2$. 
The values of the other parameters are $\theta = 20$, $\bar k = 300$, $M = 5$, $x_f = 0.3947$, $x_c = 0.3816$, $\alpha = 0.117$. 
We choose the input constraint sets $ \U = [0.1, 2]$, $\U^\mathrm{s} = [0.11, 1.99]$, the weighting matrices $Q = I$ and $R = 0.05$, and the prediction horizon $L = 41$.
As the above system is not input-affine, we define a new input $\Delta u_k := u_{k+1} - u_k$ and treat $u$ as an additional state. 
The control goal is to steer the output to the setpoint $y^{\mathrm{r}} = 0.6519$ when starting from the initial condition $x_0 = \begin{bmatrix}0.4&0.6&0.1\end{bmatrix}^{\top}$.
For better practical reliability, we add a regularization to the identification approach introduced in Section~\ref{subsec:system_identification}, i.e., instead of~\eqref{eq:lsq_explicit_solution} we compute the estimated quantities as 
\begin{align}\label{eq:lsq_explicit_solution_regularized}
	\begin{bmatrix}
		\hat{A}_t & \hat{B}_t & \hat{e}_t
	\end{bmatrix} &= X_t^+ Z_t^{\top} (Z_t Z_t^{\top}+\lambda I)^{-1},\>\>
	\begin{bmatrix}
		\hat{C}_t & \hat{D}_t & \hat{r}_t
	\end{bmatrix} = Y_t Z_t^{\top} (Z_t Z_t^{\top}+\lambda I)^{-1}
\end{align}
with regularization parameter $\lambda=10^{-12}$.
Extending our theoretical results to such a regularized least-squares estimate is an interesting direction for future research.
As discussed in \cite{berberich2022linear1}, Assumptions~\ref{ass:UniqueSS}--\ref{ass:lastAss} are fulfilled for the above system.
In order to achieve closed-loop PE (Assumption \ref{ass:PE}),  we stop the identification update once the state difference between two time steps is smaller than $5\cdot10^{-6}$. 

Figure \ref{fig:solExample} shows the simulation results for the proposed indirect data-driven MPC scheme as well as the ones from~\cite{berberich2022linear1} (model-based) and~\cite{berberich2022linear2} (direct data-driven).
For the model-based and the indirect data-driven approach, the weight matrix $S$ is set to $S=100$, whereas for the direct data-driven approach it is $S = 10$. The reason is that the latter includes a slack variable which encourages faster convergence, hence necessitating a smaller choice of $S$ to avoid too fast changes in the state evolution.
Further, we note that, for the direct and indirect data-driven approaches, the initial data set was generated based on the model-based approach in order to simplify the comparison.

While the closed loop converges to the desired value for all three algorithms, we observe that the smoothest trajectory is generated by the model-based approach, followed by the identification-based. The direct data-driven approach shows slightly slower convergence (which is partly due to the smaller value for $S$) as well as overshooting behavior. Moreover, the data length $N$ needed for successful tracking in the direct data-driven MPC scheme from~\cite{berberich2022linear2} is substantially larger than in the present indirect approach.
To be precise, while we use $N=25$ data points for the proposed approach, the one from~\cite{berberich2022linear2} requires a theoretical lower bound of $N=90$ data points and, for the shown simulations, uses $N=120$ data points.

\begin{figure}
\begin{center}
	\includegraphics[width=0.9\textwidth]{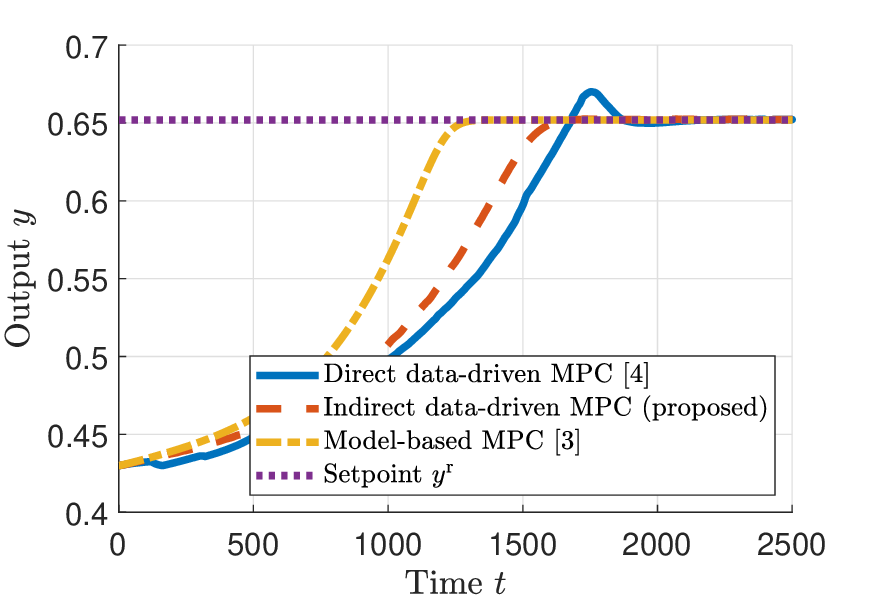}
 \end{center}
	\caption{Closed-loop simulations under the MPC schemes from~\cite{berberich2022linear1} (model-based MPC) and~\cite{berberich2022linear2} (direct data-driven MPC) as well as the approach proposed in the present paper (indirect data-driven MPC).}
	\label{fig:solExample}
\end{figure}

Further, we compare the sensitivity to hyper-parameters for the indirect and direct data-driven MPC schemes by comparing the tracking error 
\begin{align}\label{eq:error}
\sum_{t=0}^{2500}\norm{y_t - y^{\mathrm{r}}}_2
\end{align} 
for different parameters. The Figures \ref{fig:LSQMatrix_indirect} and \ref{fig:LSQMatrix_direct} show the dependency of the error on the regularization parameters and the data length $N$ for both approaches. 
The white cells denote simulations with numerical issues (in particular, infeasibility of the optimization).
It can be seen that the indirect approach has significantly improved robustness against parameter variations compared to the direct approach.
In particular, while the direct data-driven MPC scheme from~\cite{berberich2022linear2} requires a data length (roughly) below $N=150$ to avoid a too high linearization error, the proposed indirect approach can cope with substantially higher values up to $N=300$.
Similarly, the range of the regularization parameters leading to satisfactory performance is significantly higher in the indirect approach compared to the direct one.

\section{Conclusion}\label{sec:conclusion}
In this paper, we have introduced an indirect data-driven MPC scheme for nonlinear systems combining online linear system identification and tracking MPC, and admitting desirable closed-loop guarantees.
The presented indirect approach is closely related to the existing direct data-driven MPC scheme from~\cite{berberich2022linear2} which uses the Fundamental Lemma for prediction.
Our analysis reveals that, despite the fundamental difference between the two employed prediction models (identified model vs.\ Fundamental Lemma), both schemes admit comparable theoretical guarantees.
Yet, several notable differences should be highlighted.

\begin{figure}[H]
	\centering
	\input{Plots/lsqMatrix.tex}
	\caption{Error~\eqref{eq:error} for indirect data-driven MPC (proposed approach).}
	\label{fig:LSQMatrix_indirect}
	\centering
 
	\vskip17pt
 \input{Plots/ddMatrix.tex}
	\caption{
 Error~\eqref{eq:error} for direct data-driven MPC (approach from~\cite{berberich2022linear2}).}
	\label{fig:LSQMatrix_direct}
\end{figure}

In comparison to~\cite{berberich2022linear2}, the main benefit of the indirect data-driven MPC scheme is that it is simpler to tune and admits a (slightly) easier theoretical analysis.
Further, contrary to its direct counterpart, the computational complexity of the indirect approach does not scale with the data length.
While the complexity of the employed identification algorithm scales with the data length, a recursive identification algorithm could be used to overcome this issue.
On the other hand, the main benefit of the direct approach from~\cite{berberich2022linear2} is that it allows to control unknown system based only on input-output data, whereas the presented indirect scheme requires state measurements.
Finally, the indirect approach employs a one-step model whereas the direct approach uses a multi-step predictor, which causes further inherent differences~\cite{koehler2022state}.
A more formal theoretical investigation of the connections between direct and indirect data-driven MPC is an interesting issue for future research.

\section{Acknowledgment}
Johannes K\"ohler was supported by the Swiss National Science Foundation under NCCR Automation (grant agreement 51NF40 180545).

 \bibliographystyle{plain}
 \bibliography{lit}
\end{document}